\DeclareMathAlphabet{\bb}{U}{msb}{m}{n} \gdef\C{\bb C} \gdef\dZ{\bb
Z}   \gdef\dS{\bb S} \gdef\R{\bb R}
\gdef\K{\bb K} \gdef\BH{\bb H} \gdef\F{\bb F} 
 \DeclareMathOperator{\spin}{{\bf
Spin}}
\DeclareMathOperator{\Sym}{Sym}
\DeclareMathOperator{\Tr}{Tr} \DeclareMathOperator{\SL}{SL}
\DeclareMathOperator{\SO}{SO}\DeclareMathOperator{\SU}{SU}
\newcommand{\cA}{\mathcal{A}}
\newcommand{\bcE}{\boldsymbol{\mathcal{E}}}
\newcommand{\bcK}{\boldsymbol{\mathcal{K}}}
\newcommand{\cP}{{\cal P}}
\newcommand{\sH}{{\sf H}}
\newcommand{\sX}{{\sf X}}
\newcommand{\sY}{{\sf Y}}
\newcommand{\bsH}{{\boldsymbol{\sf H}}}
\newcommand{\fA}{\mathfrak{A}}
\newcommand{\fP}{\mathfrak{P}}
\newcommand{\fS}{\mathfrak{S}}
\newcommand{\fU}{\mathfrak{U}}
\newcommand{\cl}{C\kern -0.2em \ell}
\newtheorem{thm}{Theorem}
\begin{document}
\title{On algebraic structure of matter spectrum}
\author{V.~V. Varlamov\thanks{Siberian State Industrial University,
Kirova 42, Novokuznetsk 654007, Russia, e-mail:
varlamov@sibsiu.ru}}
\date{}
\maketitle
\begin{abstract}
An algebraic structure of matter spectrum is studied. It is shown that a base mathematical construction, lying in the ground of matter spectrum (introduced by Heisenberg) , is a two-level Hilbert space. Two-level structure of the Hilbert space is defined by the following pair: 1) a separable Hilbert space with operator algebras and fundamental symmetries; 2) a nonseparable (physical) Hilbert space with dynamical and gauge symmetries, that is, a space of states (energy levels) of the matter spectrum. The each state of matter spectrum is defined by a cyclic representation within Gel'fand-Naimark-Segal construction. A decomposition of the physical Hilbert space onto coherent subspaces is given. This decomposition allows one to describe the all observed spectrum of states on an equal footing, including lepton, meson and baryon sectors of the matter spectrum. Following to Heisenberg, we assume that there exist no fundamental particles, there exist fundamental symmetries. It is shown also that all the symmetries of matter spectrum are divided onto three kinds: fundamental, dynamical and gauge symmetries.
\end{abstract}
{\bf Keywords}: matter spectrum, Hilbert space, coherent subspaces, superposition principle, reduction principle, symmetries
\section{Introduction}
As is known, one of the most important undecided problems in theoretical physics is a description of mass spectrum of elementary particles (it is one from 30 problems in Ginzburg's list \cite{Ginz}). A some progress in systematization of hadron spectra has been achieved in $\SU(3)$- and $\SU(6)$-theories (see, for example, \cite{RF70}). In the recent past it has been attempted to describe a mass distribution of baryon octets within quark model and its extensions \cite{Guz,Ples}. However, a lepton sector of the particle spectrum takes no place in the framework of dynamical symmetries ($\SU(N)$-theories). Furthermore, a gauge sector has an isolated position. Such a triple division of the particle spectrum is a distinctive feature of the standard model (SM), in which an existence of the three kinds of `fundamental particles' (quarks, leptons and gauge bosons) is postulated. It is not hard to see that in SM we have a description scheme of the particle spectrum from a position of reductionism, according to which all the members of hadron sector (baryons and mesons) are constructed from the quarks, and leptons and gauge bosons are understood as fundamental particles.

As it is known, an antithesis to reductionism is a holism (teaching about whole). A description scheme of particle spectrum from a position of holism (an alternative to reductionism of SM) is the Heisenberg's approach \cite{Heisen1,Heisen}. According to Heisenberg, in the ground of the all wide variety of elementary particles we have a certain \textit{substrate of energy}, the outline form of which (via the fundamental symmetries) is a \textbf{\textit{matter spectrum}}. The each level (\textit{state}) of matter spectrum is defined by a representation of a group of fundamental symmetry. The each elementary particle presents itself a some energy level of this spectrum. An essential distinctive feature of such description is an absence of fundamental particles. Heisenberg claimed that a notion `consists of' (the main notion of reductionism) does not work in particle physics. Applying this notion, we obtain that the each particle consists of the all known particles. For that reason among the all elementary particles we cannot to separate one as a fundamental particle \cite{Heisen}. In Heisenberg's approach we have fundamental symmetries instead fundamental particles. In Heisenberg's opinion, all known symmetries in particle physics are divided on the two categories: \textit{fundamental (primary) symmetries} (such as the Lorentz group, discrete symmetries, conformal group) and \textit{dynamical (secondary) symmetries} (such as $\SU(3)$, $\SU(6)$ and so on).

In the present paper we study an algebraic formulation of the main notions of matter spectrum. A general algebraic structure of matter spectrum, defined by a two-level Hilbert space, is given in the section 2. It is shown that basic energy levels (states) of matter spectrum are constructed on the ground of cyclic representations within Gel'fand-Naimark-Segal construction. A concrete realization of the operator algebra is realized via the spinor structure associated with the each cyclic representation. Pure states (cyclic representations), defining the levels of matter spectrum, are divided with respect to a charge (an action of a pseudoautomorphism of the spinor structure) on the subsets of charged, neutral and truly neutral states. It is shown that the structure of matter spectrum is defined by a partition of physical Hilbert space (a space of states) onto coherent subspaces. At this point, a superposition principle takes place in the restricted form, that is, within the limits of coherent subspaces. None of the levels (states) of matter spectrum is separate or `fundamental', all the levels present actualized (localized) states of quantum micro-objects. Notion of symmetry plays a key role. Symmetries of matter spectrum are divided on the three kinds: \textit{fundamental} symmetries which participate in formation of states, \textit{dynamical} and \textit{gauge} symmetries which relate states with each other.
\section{General algebraic structure of matter spectrum}
An algebraic formulation of quantum theory was first proposed by von Neumann \cite{Neu36} and (in language of $C^\ast$-algebras) by Segal \cite{Seg47}. Further, an algebraic formulation of local quantum field theory was analyzed in detail by Haag \cite{Haa59} and by Araki \cite{Ara61} (see also \cite{HS62,HK64}). In this section we consider a general structure (algebraic formulation) of matter spectrum, which is defined by a construction of the Hilbert space of elementary particle. In its turn, this Hilbert space has a two-level structure. On the first level we have a \textit{separable Hilbert space} $\sH_\infty$. In $\sH_\infty$ according to standard rules of local quantum phenomenology (Schr\"{o}dinger picture) we have \textbf{\textit{observables}} ($C^\ast$-algebras), states, spectra of observables and \textit{fundamental symmetries}. The basic observable is the \textit{energy} (Hermitian operator  $H$), the fundamental symmetry is defined by the Lorentz group $\SO_0(1,3)$. On the second level we have a \textit{nonseparable Hilbert space} $\bsH^S\otimes\bsH^Q\otimes\bsH_\infty$, in which the main structural forming components are the \textbf{\textit{states}}\footnote{Segal \cite{Segal} pointed out that the fundamental object associated with a physical system may be taken either as an \textit{observable} or as a \textit{state}. Segal wrote: ``Whether observables or states are more fundamental is somewhat parallel to the same question for chickens and eggs. Leaving aside metaphysics, either notion has certain distinctive advantages as a foundational concept, but no analytical treatment starting from the states exists as yet which is of the same order of comprehensiveness and applicability as that starting from observables. In particular, the work of Birkhoff and von Neumann (1936) and of Mackey (1957), in which the states play the fundamental role, has not yet been developed to the point where their serviceability as possible frameworks for quantum field phenomenology is apparent'' \cite[p.~13]{Segal}.} (rays). State vectors in $\bsH^S\otimes\bsH^Q\otimes\bsH_\infty$ are constructed from irreducible finite-dimensional representations of the group $\SL(2,\C)$. These representations are defined in eigenvector subspaces $\sH_E\subset\sH_\infty$ of the energy operator $H$. Thus, state vectors in $\bsH^S\otimes\bsH^Q\otimes\bsH_\infty$ define spin and charge degrees of freedom of elementary particle. An elementary particle presents itself a superposition of state vectors in $\bsH^S\otimes\bsH^Q\otimes\bsH_\infty$, that is, in the case of pure entangled states we have nonseparable (nonlocal) state. Therefore, two-level structure of the Hilbert space of elementary particle is defined by the pair $(\sH_\infty,\,\bsH^S\otimes\bsH^Q\otimes\bsH_\infty)$.
\subsection{Hilbert space $\sH_\infty$}
So, on the first level we have separable Hilbert space $\sH_\infty$, that is, $\sH_\infty$ is a Banach space endowed with enumerable base which is dense everywhere in $\sH_\infty$ (any element from $\sH_\infty$ is represented as a limit of sequence of the elements from enumerable set). Observables play a key role on the level $\sH_\infty$. An original object of consideration is $C^\ast$-algebra $\fA$ with the unit (\textit{algebra of observables} or algebra of bounded observables). Hermitian elements of this algebra are bounded \textit{observables}\footnote{Hermitian elements of $C^\ast$-algebra $\fA$ form \textit{Jordan algebra} $\fA_h$. In $\fA_h$ we have linear combinations with the real coefficients. The square of the each element in $\fA_h$ is defined by a symmetrical product (pseudoproduct) $A\circ B=1/4[(A+B)^2-(A-B)^2]$.}. A positive functional $\omega$ over $\fA$ (with the norm $\|\omega\|\equiv\omega(1)=1$) is called a \textit{state} of the algebra $\fA$. A set of the all states of $\fA$ we will denote as $S(\fA)$. The magnitude $\omega(A)$ at $A=A^\ast$ is understood as an \textit{average value} of the observable $A$ in the state $\omega$. $S(\fA)$ is a \textit{convex set}, that is, for any two states $\omega_1$, $\omega_2$ and $\lambda_1,\,\lambda_2\geq 0$, $\lambda_1+\lambda_2=1$, we have $\lambda_1\omega_1+\lambda_2\omega_2\in S(\fA)$. The state $\omega$ is called \textit{mixed state} (or statistical mixture), when $\omega$ can be represented in the form $\omega=\lambda\omega_1+(1-\lambda)\omega_2$, where $0<\lambda<1$ and $\omega_1$, $\omega_2$ are two different states of the algebra $\fA$. States, which cannot be represented as mixed states, are called \textit{pure states} (pure states are extremal points of the set $S(\fA)$). A set of the all pure states of $C^\ast$-algebra $\fA$ we denote via $PS(\fA)$. Let $\fS$ be a set of such states of the algebra $\fA$, for which the condition $\omega(A)\geq 0$ is fulfilled for the all $\omega\in\fS$, that is, $A$ is a positive element of the algebra $\fA$ ($A$ can be represented in the form $A=B^\ast B$). Then $\fS$ is called a set of \textit{physical states} of $\fA$, and the pair $(\fA,\fS)$ is called a \textit{physical system}.

For the arbitrary $C^\ast$-algebra $\fA$ a transition probability between two pure states $\omega_1,\,\omega_2\in PS(\fA)$ is given by the Roberts-Roepstorff formula \cite{RR69}:
\[
|\langle\Phi_1\mid\Phi_2\rangle|^2=\omega_1\cdot\omega_2=1-1/4\|\omega_1-\omega_2\|^2,
\]
where $\left|\Phi_1\right\rangle$ and $\left|\Phi_2\right\rangle$ are unit vectors of the space $\sH_\infty$. At this point, $\omega_1\cdot\omega_2=\omega_2\cdot\omega_1$ and $\omega_1\cdot\omega_2$ always belongs to $[0,1]$. Correspondingly, $\omega_1\cdot\omega_2=1$ exactly when $\omega_1=\omega_2$. Two pure states $\omega_1$ and $\omega_2$ are called \textit{orthogonal states} if the transition probability $\omega_1\cdot\omega_2$ is equal to zero. Therefore, two subsets $S_1$ and $S_2$ in $PS(\fA)$ are mutually orthogonal when $\omega_1\cdot\omega_2=0$ for the all $\omega_1\in S_1$ and $\omega_2\in S_2$. Further, nonempty subset $S\in PS(\fA)$ is called \textit{indecomposable set} in the case when $S$ cannot be divided on the two orthogonal subsets. Following to Haag and Kastler \cite{HK64}, we assume that any maximal indecomposable set is a \textit{sector}. So, $PS(\fA)$ is divided on the sectors, therefore, in $PS(\fA)$ there exists an equivalence relation $\omega_1\sim\omega_2$ if and only if there exists an indecomposable set in $PS(\fA)$ containing $\omega_1$ and $\omega_2$. Therefore, $PS(\fA)$ is divided in pairs on disjoint and mutually orthogonal sectors which coincide with \textit{equivalence classes} in $PS(\fA)$.

One of the most important aspects in theory of $C^\ast$-algebras is a duality between states and representations. A relation between states and irreducible representations of operator algebras was first formulated by Segal \cite{Se47}. Let $\pi$ be a some representation of the algebra $\fA$ in the Hilbert space $\sH_\infty$, then for any non-null vector $\left|\Phi\right\rangle\in\sH_\infty$ the expression
\begin{equation}\label{VectState}
\omega_\Phi(A)=\frac{\langle\Phi\mid\pi(A)\Phi\rangle}{\langle\Phi\mid\Phi\rangle}
\end{equation}
defines a state $\omega_\Phi(A)$ of the algebra $\fA$. $\omega_\Phi(A)$ is called a \textit{vector state} associated with the representation $\pi$ ($\omega_\Phi(A)$ corresponds to the vector $\left|\Phi\right\rangle$). Let $\rho$ be a \textit{density matrix} in $\sH_\infty$, then
\[
\omega_\rho(A)=\Tr\left(\rho\pi(A)\right).
\]
Analogously, $\omega_\rho(A)$ is a state associated with $\pi$ and $\omega_\rho(A)$ corresponds to density matrix $\rho$. The states $\omega_\rho(A)$ are statistical mixtures of the vector states (\ref{VectState}). Let $S_\pi$ be a set of all states associated with the representation $\pi$. Two representations $\pi_1$ and $\pi_2$ with one and the same set of associated states (that is, $S_{\pi_1}=S_{\pi_2}$) are called phenomenologically equivalent sets (it corresponds to unitary equivalent representations). Moreover, the set $PS(\fA)$ of the all pure states of $C^\ast$-algebra $\fA$ coincides with the set of all vector states associated with the all \textit{irreducible representations} of the algebra $\fA$.

Further, let $\pi$ be a representation of $C^\ast$-algebra $\fA$ in $\sH_\infty$ and let $\left|\Phi\right\rangle$ be a \textit{cyclic vector}\footnote{Vector $\left|\Phi\right\rangle\in\sH_\infty$ is called a cyclic vector for the representation $\pi$, if the all vectors $\left|\pi(A)\Phi\right\rangle$ (where $A\in\fA$) form a total set in $\sH_\infty$, that is, such a set, for which a closing of linear envelope is dense everywhere in $\sH_\infty$. $\pi$ with the cyclic vector is called a cyclic representation.} of the representation $\pi$ defining the state $\omega_\Phi$. In accordance with Gel'fand-Naimark-Segal construction (see \cite{BLOT}) the each state defines a some representation of the algebra $\fA$. At this point, resulting representation is irreducible exactly when the state is pure. Close relationship between states and representations of $C^\ast$-algebra, based on the GNS construction, allows us to consider representations of the algebra as an effective tool for organization of states. This fact becomes more evident at the concrete realization $\pi(\fA)$.
\subsubsection{Fundamental symmetries}
Usually (in abstract-algebraic formulation), symmetry is understood as a transformation of physical system, which does not change its structural properties. In its turn, physical system is characterized by the algebra of observables  $\fA$ and by the set of states $S(\fA)$. Following to Heisenberg, we assume that on the level of separable Hilbert space $\sH_\infty$ we have \textit{fundamental (primary) symmetries}. Let us define fundamental symmetry as a pair of bijections $\alpha:\;\fA\rightarrow\fA$ and $\alpha^\prime:\;S(\fA)\rightarrow S(\fA)$, satisfying to coordination condition: $(\alpha^\prime\omega)(\alpha A)=\omega(A)$ for the all $A\in\fA$, $\omega\in S(\fA)$. A set of all symmetries of the physical system form a group with multiplication defined  by a composition of bijections. The product of the two symmetries $(\alpha,\alpha^\prime)$ and $(\beta,\beta^\prime)$ is a symmetry $(\alpha\beta,\alpha^\prime\beta^\prime)$, where $(\alpha\beta)(A)\equiv\alpha[\beta(A)]$ and $(\alpha^\prime\beta^\prime)(\omega)\equiv\alpha^\prime[\beta^\prime(\omega)]$.

The group $G$ is called a \textit{group of fundamental symmetry} of $C^\ast$-algebra $\fA$ when there exists a homomorphism $g\rightarrow(\alpha_g,\alpha^\prime_g)$ of the group $G$ into the group of all symmetries of the system $(\fA,S(\fA))$. We assume that $G$ is a noncompact Lie group (for example, Lorentz group, Poincar\'{e} group or conformal group). Then the following continuity condition is fulfilled: at any physical state $\omega\in\fS$ and any fixed $A\in\fA$ the function $g\rightarrow\omega(\alpha_g(A))$ is continuous on $g$. At this point, the group $G$ is unitary-antiunitary realized if there exists a continuous representation $g\rightarrow U_g$ of the group $G$ defined by unitary or antiunitary operators (that is, $\alpha_g$ are algebraic automorphisms or antiautomorphisms) in the Hilbert space $\sH_\infty$ such that for the all $A\in\fA$, $g\in G$ there is $\alpha_g(A)=U_gA^{(\ast)}U^{-1}_g$, where $A^{(\ast)}$ is $A$ for unitary $U_g$ and $A^\ast$ for antiunitary $U_g$.
\subsubsection{Concrete realization $\pi(\fA)$}
In this section we will consider a concrete realization of the operator algebra $\fA$. A transition $\fA\Rightarrow\pi(\fA)$ from $\fA$ to a concrete algebra $\pi(\fA)$ is called sometimes as `clothing'. So, the basic observable is \textit{energy} which represented by Hermitian operator $H$. Let $G=\SO_0(1,3)\simeq\SL(2,\C)/\dZ_2$ be the group of fundamental symmetry, where $\SO_0(1,3)$ is the Lorentz group. Let $\widetilde{G}\simeq\SL(2,\C)$ be the \textit{universal covering} of $\SO_0(1,3)$. Let $H$ be the energy operator defined on the separable Hilbert space $\sH_\infty$. Then all the possible values of energy (states) are eigenvalues of the operator $H$. At this point, if $E_1\neq E_2$ are eigenvalues of $H$, and $\left|\Phi_1\right\rangle$ and $\left|\Phi_2\right\rangle$ are corresponding eigenvectors in the space $\sH_\infty$, then $\langle\Phi_1\mid\Phi_2\rangle=0$. All the eigenvectors, belonging to a given eigenvalue $E$, form (together with the null vector) an \textit{eigenvector subspace} $\sH_E$ of the Hilbert space $\sH_\infty$. All the eigenvector subspaces $\sH_E\in\sH_\infty$ are finite-dimensional. A dimensionality $r$ of $\sH_E$ is called a \textit{multiplicity} of the eigenvalue $E$. When $r>1$ the eigenvalue $E$ is \textit{$r$-fold degenerate}. Further, let $\sX_l$, $\sY_l$ be infinitesimal operators of the complex envelope of the group algebra $\mathfrak{sl}(2,\C)$ for universal covering $\widetilde{G}$, $l=1,2,3$. As is known \cite{BHJ26}, the energy operator $H$ commutes with the all operators in $\sH_\infty$, which represent a Lie algebra of the group $\widetilde{G}$. Let us consider an arbitrary eigenvector subspace $\sH_E$ of the energy operator $H$. Since the operators $\sX_l$, $\sY_l$ and $H$ commute with the each other, then, as is known \cite{Dir}, for these operators we can build a common system of eigenfunctions. It means that the subspace $\sH_E$ is invariant with respect to operators $\sX_l$, $\sY_l$ (moreover, the operators $\sX_l$, $\sY_l$ can be considered \textit{only on} $\sH_E$). Further, we suppose that there is a some \textit{local representation} of the group $\widetilde{G}$ defined by the operators acting in the space $\sH_\infty$. At this point, we assume that all the representing operators commute with $H$. Then the each eigenvector subspace $\sH_E$ of the energy operator is invariant with respect to operators of complex momentum $\sX_l$, $\sY_l$. It allows us to identify subspaces $\sH_E$ with symmetrical spaces $\Sym_{(k,r)}$ of interlocking representations $\boldsymbol{\tau}_{k/2,r/2}$ of the Lorentz group. Thus, we obtain a concrete realization (`clothing') of the operator algebra $\pi(\fA)\rightarrow\pi(H)$, where $\pi\equiv\boldsymbol{\tau}_{k/2,r/2}$. The system of interlocking representations of the Lorentz group is shown on the Fig.\,1 (for more details see \cite{Var03,Var07}). Hence it follows that the each possible value of energy (energy level) is a vector state of the form (\ref{VectState}):
\begin{equation}\label{VectState2}
\omega_\Phi(H)=\frac{\langle\Phi\mid\pi(H)\Phi\rangle}{\langle\Phi\mid\Phi\rangle}=
\frac{\langle\Phi\mid\boldsymbol{\tau}_{k/2,r/2}(H)\Phi\rangle}{\langle\Phi\mid\Phi\rangle},
\end{equation}
The state $\omega_\Phi(H)$ is associated with the representation $\pi\equiv\boldsymbol{\tau}_{k/2,r/2}$ and the each $\omega_\Phi(H)$ corresponds to non-null (cyclic) vector $\left|\Phi\right\rangle\in\sH_\infty$. Analogously, if $\rho$ is the density matrix in $\sH_\infty$, then
\[
\omega_\rho(H)=\Tr\left(\rho\boldsymbol{\tau}_{k/2,r/2}(H)\right)
\]
is a statistical mixture of the vector states (\ref{VectState2}).
\begin{figure}[ht]
\unitlength=1.5mm
\begin{center}
\begin{picture}(100,50)
\put(50,0){$\overset{(0,0)}{\bullet}$}\put(47,5.5){\line(1,0){10}}\put(52.25,2.75){\line(0,1){7.25}}
%\put(48,3){$\scr(0,0)$}
\put(55,5){$\overset{(\frac{1}{2},0)}{\bullet}$}
%\put(53,8){$\scr(\tfrac{1}{2},0)$}
\put(45,5){$\overset{(0,\frac{1}{2})}{\bullet}$}
\put(40,10){$\overset{(0,1)}{\bullet}$}\put(42,10.5){\line(1,0){10}}\put(47.25,7.75){\line(0,1){7.25}}
\put(50,10){$\overset{(\frac{1}{2},\frac{1}{2})}{\bullet}$}
\put(52,10.5){\line(1,0){10}}\put(57.25,7.75){\line(0,1){7.25}}
\put(60,10){$\overset{(1,0)}{\bullet}$}
\put(35,15){$\overset{(0,\frac{3}{2})}{\bullet}$}\put(37,15.5){\line(1,0){10}}\put(42.25,12.75){\line(0,1){7.25}}
\put(45,15){$\overset{(\frac{1}{2},1)}{\bullet}$}\put(47,15.5){\line(1,0){10}}\put(52.25,12.75){\line(0,1){7.25}}
\put(55,15){$\overset{(1,\frac{1}{2})}{\bullet}$}\put(57,15.5){\line(1,0){10}}\put(62.25,12.75){\line(0,1){7.25}}
\put(65,15){$\overset{(\frac{3}{2},0)}{\bullet}$}
\put(30,20){$\overset{(0,2)}{\bullet}$}\put(32,20.5){\line(1,0){10}}\put(37.25,17.75){\line(0,1){7.25}}
\put(40,20){$\overset{(\frac{1}{2},\frac{3}{2})}{\bullet}$}
\put(42,20.5){\line(1,0){10}}\put(47.25,17.75){\line(0,1){7.25}}
\put(50,20){$\overset{(1,1)}{\bullet}$}\put(52,20.5){\line(1,0){10}}\put(57.25,17.75){\line(0,1){7.25}}
\put(60,20){$\overset{(\frac{3}{2},\frac{1}{2})}{\bullet}$}
\put(62,20.5){\line(1,0){10}}\put(67.25,17.75){\line(0,1){7.25}}
\put(70,20){$\overset{(2,0)}{\bullet}$}
\put(25,25){$\overset{(0,\frac{5}{2})}{\bullet}$}\put(27,25.5){\line(1,0){10}}\put(32.25,22.75){\line(0,1){7.25}}
\put(35,25){$\overset{(\frac{1}{2},2)}{\bullet}$}\put(37,25.5){\line(1,0){10}}\put(42.25,22.75){\line(0,1){7.25}}
\put(45,25){$\overset{(1,\frac{3}{2})}{\bullet}$}\put(47,25.5){\line(1,0){10}}\put(52.25,22.75){\line(0,1){7.25}}
\put(55,25){$\overset{(\frac{3}{2},1)}{\bullet}$}\put(57,25.5){\line(1,0){10}}\put(62.25,22.75){\line(0,1){7.25}}
\put(65,25){$\overset{(2,\frac{1}{2})}{\bullet}$}\put(67,25.5){\line(1,0){10}}\put(72.25,22.75){\line(0,1){7.25}}
\put(75,25){$\overset{(\frac{5}{2},0)}{\bullet}$}
\put(20,30){$\overset{(0,3)}{\bullet}$}\put(22,30.5){\line(1,0){10}}\put(27.25,27.75){\line(0,1){7.25}}
\put(30,30){$\overset{(\frac{1}{2},\frac{5}{2})}{\bullet}$}
\put(32,30.5){\line(1,0){10}}\put(37.25,27.75){\line(0,1){7.25}}
\put(40,30){$\overset{(1,2)}{\bullet}$}\put(42,30.5){\line(1,0){10}}\put(47.25,27.75){\line(0,1){7.25}}
\put(50,30){$\overset{(\frac{3}{2},\frac{3}{2})}{\bullet}$}
\put(52,30.5){\line(1,0){10}}\put(57.25,27.75){\line(0,1){7.25}}
\put(60,30){$\overset{(2,1)}{\bullet}$}\put(62,30.5){\line(1,0){10}}\put(67.25,27.75){\line(0,1){7.25}}
\put(70,30){$\overset{(\frac{5}{2},\frac{5}{2})}{\bullet}$}
\put(72,30.5){\line(1,0){10}}\put(77.25,27.75){\line(0,1){7.25}}
\put(80,30){$\overset{(3,0)}{\bullet}$}
\put(15,35){$\overset{(0,\frac{7}{2})}{\bullet}$}\put(17,35.5){\line(1,0){10}}\put(22.25,32.75){\line(0,1){7.25}}
\put(25,35){$\overset{(\frac{1}{2},3)}{\bullet}$}\put(27,35.5){\line(1,0){10}}\put(32.25,32.75){\line(0,1){7.25}}
\put(35,35){$\overset{(1,\frac{5}{2})}{\bullet}$}\put(37,35.5){\line(1,0){10}}\put(42.25,32.75){\line(0,1){7.25}}
\put(45,35){$\overset{(\frac{3}{2},2)}{\bullet}$}\put(47,35.5){\line(1,0){10}}\put(52.25,32.75){\line(0,1){7.25}}
\put(55,35){$\overset{(2,\frac{3}{2})}{\bullet}$}\put(57,35.5){\line(1,0){10}}\put(62.25,32.75){\line(0,1){7.25}}
\put(65,35){$\overset{(\frac{5}{2},1)}{\bullet}$}\put(67,35.5){\line(1,0){10}}\put(72.25,32.75){\line(0,1){7.25}}
\put(75,35){$\overset{(3,\frac{1}{2})}{\bullet}$}\put(77,35.5){\line(1,0){10}}\put(82.25,32.75){\line(0,1){7.25}}
\put(85,35){$\overset{(\frac{7}{2},0)}{\bullet}$}
\put(10,40){$\overset{(0,4)}{\bullet}$}\put(12,40.5){\line(1,0){10}}
\put(20,40){$\overset{(\frac{1}{2},\frac{7}{2})}{\bullet}$}\put(22,40.5){\line(1,0){10}}
\put(30,40){$\overset{(1,3)}{\bullet}$}\put(32,40.5){\line(1,0){10}}
\put(40,40){$\overset{(\frac{3}{2},\frac{5}{2})}{\bullet}$}\put(42,40.5){\line(1,0){10}}
\put(50,40){$\overset{(2,2)}{\bullet}$}\put(52,40.5){\line(1,0){10}}
\put(60,40){$\overset{(\frac{5}{2},\frac{3}{2})}{\bullet}$}\put(62,40.5){\line(1,0){10}}
\put(70,40){$\overset{(3,1)}{\bullet}$}\put(72,40.5){\line(1,0){10}}
\put(80,40){$\overset{(\frac{7}{2},\frac{1}{2})}{\bullet}$}\put(82,40.5){\line(1,0){10}}
\put(90,40){$\overset{(4,0)}{\bullet}$}
\put(11.5,45){$\vdots$}
\put(21.5,45){$\vdots$}
\put(31.5,45){$\vdots$}
\put(41.5,45){$\vdots$}
\put(51.5,45){$\vdots$}
\put(61.5,45){$\vdots$}
\put(71.5,45){$\vdots$}
\put(81.5,45){$\vdots$}
\put(91.5,45){$\vdots$}
\put(10,0.5){\line(1,0){42}}\put(50,0.5){\vector(1,0){42}}
\put(16.5,32){$\vdots$}
\put(16.5,29){$\vdots$}
\put(16.5,26){$\vdots$}
\put(16.5,23){$\vdots$}
\put(16.5,20){$\vdots$}
\put(16.5,17){$\vdots$}
\put(16.5,14){$\vdots$}
\put(16.5,11){$\vdots$}
\put(16.5,9){$\vdots$}
\put(16.5,6){$\vdots$}
\put(16.5,3){$\vdots$}
\put(16.5,1.5){$\cdot$}
\put(16.5,0){$\cdot$}
\put(14.5,-3){$-\frac{7}{2}$}
\put(21.5,27){$\vdots$}
\put(21.5,24){$\vdots$}
\put(21.5,21){$\vdots$}
\put(21.5,18){$\vdots$}
\put(21.5,15){$\vdots$}
\put(21.5,13){$\vdots$}
\put(21.5,9){$\vdots$}
\put(21.5,6){$\vdots$}
\put(21.5,3){$\vdots$}
\put(21.5,1.5){$\cdot$}
\put(21.5,0){$\cdot$}
\put(19.5,-3){$-3$}
\put(26.5,22){$\vdots$}
\put(26.5,19){$\vdots$}
\put(26.5,16){$\vdots$}
\put(26.5,13){$\vdots$}
\put(26.5,10){$\vdots$}
\put(26.5,7){$\vdots$}
\put(26.5,4){$\vdots$}
\put(26.5,1){$\vdots$}
\put(24.5,-3){$-\frac{5}{2}$}
\put(31.5,17){$\vdots$}
\put(31.5,14){$\vdots$}
\put(31.5,11){$\vdots$}
\put(31.5,8){$\vdots$}
\put(31.5,5){$\vdots$}
\put(31.5,2){$\vdots$}
\put(31.5,0.5){$\cdot$}
\put(29.5,-3){$-2$}
\put(36.5,12){$\vdots$}
\put(36.5,9){$\vdots$}
\put(36.5,6){$\vdots$}
\put(36.5,3){$\vdots$}
\put(36.5,1.5){$\cdot$}
\put(36.5,0){$\cdot$}
\put(34.5,-3){$-\frac{3}{2}$}
\put(41.5,7){$\vdots$}
\put(41.5,4){$\vdots$}
\put(41.5,1){$\vdots$}
\put(39.5,-3){$-1$}
\put(46.5,2){$\vdots$}
\put(46.5,0.5){$\cdot$}
\put(44.5,-3){$-\frac{1}{2}$}
\put(51.5,-3){$0$}
\put(56.5,2){$\vdots$}
\put(56.5,0.5){$\cdot$}
\put(56.5,-3){$\frac{1}{2}$}
\put(61.5,7){$\vdots$}
\put(61.5,4){$\vdots$}
\put(61.5,1){$\vdots$}
\put(61.5,-3){$1$}
\put(66.5,12){$\vdots$}
\put(66.5,9){$\vdots$}
\put(66.5,6){$\vdots$}
\put(66.5,3){$\vdots$}
\put(66.5,1.5){$\cdot$}
\put(66.5,0){$\cdot$}
\put(66.5,-3){$\frac{3}{2}$}
\put(71.5,17){$\vdots$}
\put(71.5,14){$\vdots$}
\put(71.5,11){$\vdots$}
\put(71.5,8){$\vdots$}
\put(71.5,5){$\vdots$}
\put(71.5,2){$\vdots$}
\put(71.5,0.5){$\cdot$}
\put(71.5,-3){$2$}
\put(76.5,22){$\vdots$}
\put(76.5,19){$\vdots$}
\put(76.5,16){$\vdots$}
\put(76.5,13){$\vdots$}
\put(76.5,10){$\vdots$}
\put(76.5,7){$\vdots$}
\put(76.5,4){$\vdots$}
\put(76.5,1){$\vdots$}
\put(76.5,-3){$\frac{5}{2}$}
\put(81.5,27){$\vdots$}
\put(81.5,24){$\vdots$}
\put(81.5,21){$\vdots$}
\put(81.5,18){$\vdots$}
\put(81.5,15){$\vdots$}
\put(81.5,13){$\vdots$}
\put(81.5,9){$\vdots$}
\put(81.5,6){$\vdots$}
\put(81.5,3){$\vdots$}
\put(81.5,1.5){$\cdot$}
\put(81.5,0){$\cdot$}
\put(81.5,-3){$3$}
\put(86.5,32){$\vdots$}
\put(86.5,29){$\vdots$}
\put(86.5,26){$\vdots$}
\put(86.5,23){$\vdots$}
\put(86.5,20){$\vdots$}
\put(86.5,17){$\vdots$}
\put(86.5,14){$\vdots$}
\put(86.5,11){$\vdots$}
\put(86.5,9){$\vdots$}
\put(86.5,6){$\vdots$}
\put(86.5,3){$\vdots$}
\put(86.5,1.5){$\cdot$}
\put(86.5,0){$\cdot$}
\put(86.5,-3){$\frac{7}{2}$}
%\put(53.8,1.65){\line(5,6){3}}
\put(53.8,1.7){$\cdot$}\put(54.3,2.2){$\cdot$}\put(54.8,2.7){$\cdot$}\put(55.3,3.3){$\cdot$}\put(55.8,3.8){$\cdot$}
\put(56.3,4.3){$\cdot$}
%\put(58.75,6.75){\line(5,6){3}}
\put(58.8,6.8){$\cdot$}\put(59.3,7.3){$\cdot$}\put(59.8,7.8){$\cdot$}\put(60.3,8.3){$\cdot$}\put(60.8,8.8){$\cdot$}
\put(61.3,9.3){$\cdot$}
%\put(63.75,11.75){\line(5,6){3}}
\put(63.8,11.8){$\cdot$}\put(64.3,12.3){$\cdot$}\put(64.8,12.8){$\cdot$}\put(65.3,13.3){$\cdot$}\put(65.8,13.8){$\cdot$}
\put(66.3,14.3){$\cdot$}
%\put(68.75,16.75){\line(5,6){3}}
\put(68.8,16.8){$\cdot$}\put(69.3,17.3){$\cdot$}\put(69.8,17.8){$\cdot$}\put(70.3,18.3){$\cdot$}\put(70.8,18.8){$\cdot$}
\put(71.3,19.3){$\cdot$}
%\put(33.75,21.75){\line(5,6){3}}
\put(33.8,21.8){$\cdot$}\put(34.3,22.3){$\cdot$}\put(34.8,22.8){$\cdot$}\put(35.3,23.3){$\cdot$}\put(35.8,23.8){$\cdot$}
\put(36.3,24.3){$\cdot$}
%\put(38.75,26.75){\line(5,6){3}}
\put(38.8,26.8){$\cdot$}\put(39.3,27.3){$\cdot$}\put(39.8,27.8){$\cdot$}\put(40.3,28.3){$\cdot$}\put(40.8,28.8){$\cdot$}
\put(41.3,29.3){$\cdot$}
%\put(43.75,31.75){\line(5,6){3}}
\put(43.8,31.8){$\cdot$}\put(44.3,32.3){$\cdot$}\put(44.8,32.8){$\cdot$}\put(45.3,33.3){$\cdot$}\put(45.8,33.8){$\cdot$}
\put(46.3,34.3){$\cdot$}
%\put(48.75,36.75){\line(5,6){3}}
\put(48.8,36.8){$\cdot$}\put(49.3,37.3){$\cdot$}\put(49.8,37.8){$\cdot$}\put(50.3,38.3){$\cdot$}\put(50.8,38.8){$\cdot$}
\put(51.3,39.3){$\cdot$}
%\put(47.25,5.25){\line(5,-6){3}}
\put(47.3,4.4){$\cdot$}\put(47.8,3.9){$\cdot$}\put(48.3,3.4){$\cdot$}\put(48.8,2.9){$\cdot$}\put(49.3,2.4){$\cdot$}
\put(49.8,1.9){$\cdot$}
%\put(42.25,10.25){\line(5,-6){3}}
\put(42.3,9.4){$\cdot$}\put(42.8,8.9){$\cdot$}\put(43.3,8.4){$\cdot$}\put(43.8,7.9){$\cdot$}\put(44.3,7.4){$\cdot$}
\put(44.8,6.9){$\cdot$}
%\put(37.25,15.25){\line(5,-6){3}}
\put(37.3,14.4){$\cdot$}\put(37.8,13.9){$\cdot$}\put(38.3,13.4){$\cdot$}\put(38.8,12.9){$\cdot$}\put(39.3,12.4){$\cdot$}
\put(39.8,11.9){$\cdot$}
%\put(32.25,20.25){\line(5,-6){3}}
\put(32.3,19.4){$\cdot$}\put(32.8,18.9){$\cdot$}\put(33.3,18.4){$\cdot$}\put(33.8,17.9){$\cdot$}\put(34.3,17.4){$\cdot$}
\put(34.8,16.9){$\cdot$}
%\put(67.25,25.25){\line(5,-6){3}}
\put(67.3,24.4){$\cdot$}\put(67.8,23.9){$\cdot$}\put(68.3,23.4){$\cdot$}\put(68.8,22.9){$\cdot$}\put(69.3,22.4){$\cdot$}
\put(69.8,21.9){$\cdot$}
%\put(62.25,30.25){\line(5,-6){3}}
\put(62.3,29.4){$\cdot$}\put(62.8,28.9){$\cdot$}\put(63.3,28.4){$\cdot$}\put(63.8,27.9){$\cdot$}\put(64.3,27.4){$\cdot$}
\put(64.8,26.9){$\cdot$}
%\put(57.25,35.25){\line(5,-6){3}}
\put(57.3,34.4){$\cdot$}\put(57.8,33.9){$\cdot$}\put(58.3,33.4){$\cdot$}\put(58.8,32.9){$\cdot$}\put(59.3,32.4){$\cdot$}
\put(59.8,31.9){$\cdot$}
%\put(52.25,40.25){\line(5,-6){3}}
\put(52.3,39.4){$\cdot$}\put(52.8,38.9){$\cdot$}\put(53.3,38.4){$\cdot$}\put(53.8,37.9){$\cdot$}\put(54.3,37.4){$\cdot$}
\put(54.8,36.9){$\cdot$}
\end{picture}
\end{center}
\vspace{0.3cm}
\begin{center}\begin{minipage}{30pc}{\small {\bf Fig.\,1:} Eigenvector subspaces $\sH_E\simeq\Sym_{(k,r)}$ of the energy operator $H$. The each subspace $\sH_E$ (level of matter spectrum) is a space of irreducible representation $\boldsymbol{\tau}_{k/2,r/2}$ belonging to a system of interlocking representations of the Lorentz group. The first cell of spinorial chessboard of second order is marked by dotted lines.}\end{minipage}\end{center}
\end{figure}

Further, in virtue of the isomorphism $\SL(2,\C)\simeq\spin_+(1,3)$ we will consider the universal covering $\widetilde{G}$ as a \textit{spinor group}. It allows us to associate in addition a \textit{spinor structure} with the each cyclic vector $\left|\Phi\right\rangle\in\sH_\infty$ (in some sense, it be a second layer in `clothing' of the operator algebra). Spintensor representations of the group $\widetilde{G}\simeq\spin_+(1,3)$ form a \textit{substrate} of interlocking representations $\boldsymbol{\tau}_{k/2,r/2}$ of the Lorentz group realized in the spaces $\Sym_{(k,r)}\subset\dS_{2^{k+r}}$, where $\dS_{2^{k+r}}$ is a spinspace. In its turn, as it is known \cite{Lou91}, a spinspace is a minimal left ideal of the Clifford algebra $\cl_{p,q}$, that is, there exists an isomorphism $\dS_{2^m}(\K)\simeq I_{p,q}=\cl_{p,q}f$, where $f$ is a primitive idempotent of $\cl_{p,q}$, and $\K=f\cl_{p,q}f$ is a division ring of the algebra $\cl_{p,q}$, $m=(p+q)/2$. The complex spinspace $\dS_{2^m}(\C)$ is a complexification $\C\otimes I_{p,q}$ of the minimal left ideal $I_{p,q}$ of the real subalgebra $\cl_{p,q}$. So, $\dS_{2^{k+r}}(\C)$ is the minimal left ideal of the complex algebra $\C_{2k}\otimes\overset{\ast}{\C}_{2r}\simeq\C_{2(k+r)}$ (for more details see \cite{Var15d,Var16}). Let us define a system of \textit{basic cyclic vectors} endowed with the complex spinor structure (these vectors correspond to the system of interlocking representations of the Lorentz group):
\[
\scriptstyle
\mid\C_0,\boldsymbol{\tau}_{0,0}(H)\Phi\rangle;
\]
\[
\scriptstyle
\mid\C_2,\boldsymbol{\tau}_{1/2,0}(H)\Phi\rangle,\quad\mid\overset{\ast}{\C}_2,\boldsymbol{\tau}_{0,1/2}(H)\Phi\rangle;
\]
\[
\scriptstyle
\mid\C_2\otimes\C_2,\boldsymbol{\tau}_{1,0}(H)\Phi\rangle,\quad
\mid\C_2\otimes\overset{\ast}{\C}_2,\boldsymbol{\tau}_{1/2,1/2}(H)\Phi\rangle,\quad
\mid\overset{\ast}{\C}_2\otimes\overset{\ast}{\C}_2,\boldsymbol{\tau}_{0,1}(H)\Phi\rangle;
\]
\[
\scriptstyle
\mid\C_2\otimes\C_2\otimes\C_2,\boldsymbol{\tau}_{3/2,0}(H)\Phi\rangle,\quad
\mid\C_2\otimes\C_2\otimes\overset{\ast}{\C}_2,\boldsymbol{\tau}_{1,1/2}(H)\Phi\rangle,\quad
\mid\C_2\otimes\overset{\ast}{\C}_2\otimes\overset{\ast}{\C}_2,\boldsymbol{\tau}_{1/2,1}(H)\Phi\rangle,\quad
\mid\overset{\ast}{\C}_2\otimes\overset{\ast}{\C}_2\otimes\overset{\ast}{\C}_2,\boldsymbol{\tau}_{0,3/2}(H)\Phi\rangle;
\]
\[
\ldots\ldots\ldots\ldots\ldots\ldots\ldots\ldots\ldots\ldots\ldots\ldots\ldots\ldots\ldots\ldots\ldots\ldots\ldots
\]
Therefore, in accordance with GNS construction we have complex vector states of the form
\begin{equation}\label{VectState3}
\omega^c_\Phi(H)=
\frac{\langle\Phi\mid\C_{2(k+r)},\boldsymbol{\tau}_{k/2,r/2}(H)\Phi\rangle}{\langle\Phi\mid\Phi\rangle},
\end{equation}
The states $\omega^c_\Phi(H)$ are associated with the complex representations $\boldsymbol{\tau}_{k/2,r/2}(H)$ and cyclic vectors $\left|\Phi\right\rangle\in\sH_\infty$.

As is known, in the Lagrangian formalism of the standard (local) quantum field theory \textit{charged particles} are described by \textit{complex fields}. In our case, pure states of the form (\ref{VectState3}) correspond to \textit{charged states}. At this point, the sign of charge is changed under action of the pseudoautomorphism $\cA\rightarrow\overline{\cA}$ of the complex spinor structure (for more details see \cite{Var01a,Var04,Var14}). Following to analogy with the Lagrangian formalism, where \textit{neutral particles} are described by \textit{real fields}, we introduce vector states of the form
\begin{equation}\label{VectState4}
\omega^r_\Phi(H)=
\frac{\langle\Phi\mid\cl_{p,q},\boldsymbol{\tau}_{k/2,r/2}(H)\Phi\rangle}{\langle\Phi\mid\Phi\rangle}.
\end{equation}
The states (\ref{VectState4}) are associated with the real representations $\boldsymbol{\tau}_{k/2,r/2}(H)$, that is, these representations are endowed with a \textit{real spinor structure}, where $\cl_{p,q}$ is a real subalgebra of $\C_{2(k+r)}$. States of the form (\ref{VectState4}) correspond to \textit{neutral states}. Since the real spinor structure is appeared in the result of reduction $\C_{2(k+r)}\rightarrow\cl_{p,q}$, then (as a consequence) a \textit{charge conjugation} $C$ (pseudoautomorphism $\cA\rightarrow\overline{\cA}$) for the algebras $\cl_{p,q}$ over the real number field $\F=\R$ and quaternionic division ring $\K\simeq\BH$ (the types $p-q\equiv 4,6\pmod{8}$) is reduced to \textit{particle-antiparticle interchange} $C^\prime$ (see \cite{Var01a,Var04,Var14}). As is known, there exist two classes of neutral particles: 1) particles which have antiparticles, such as neutrons, neutrino\footnote{However, it should be noted that the question whether neutrinos are Dirac or Majorana particles (truly neutral fermions) is still open (the last hypothesis being preferred by particle physicists).} and so on; 2) particles which coincide with their antiparticles (for example, photons, $\pi^0$-mesons and so on), that is, so-called \textit{truly neutral particles}. The first class is described by neutral states $\omega^r_\Phi(H)$ with the algebras $\cl_{p,q}$ over the field $\F=\R$ with the rings $\K\simeq\BH$ and $\K\simeq\BH\oplus\BH$ (types $p-q\equiv 4,6\pmod{8}$ and $p-q\equiv 5\pmod{8}$). With the aim to describe the second class of neutral particles we introduce \textit{truly neutral states} $\omega^{r_0}_\Phi(H)$ with the algebras $\cl_{p,q}$ over the number field $\F=\R$ and real division rings $\K\simeq\R$ and $\K\simeq\R\oplus\R$ (types $p-q\equiv 0,2\pmod{8}$ and $p-q\equiv 1\pmod{8}$). In the case of states $\omega^{r_0}_\Phi(H)$ pseudoautomorphism $\cA\rightarrow\overline{\cA}$ is reduced to identical transformation (particle coincides with its antiparticle).

Further, if $\rho$ is the density matrix in $\sH_\infty$, then $\omega^{c}_\rho(H)$, $\omega^{r}_\rho(H)$ and $\omega^{r_0}_\rho(H)$ are statistical mixtures of charged, neutral and truly neutral states.

Before we proceed with a construction of physical Hilbert space, let us consider in more detail the structure of states (\ref{VectState2})-(\ref{VectState4}). The states (\ref{VectState2})-(\ref{VectState4}) present the levels of matter spectrum, that is, actualized (localized) states of quantum micro-objects (`elementary particles').  The each state of the form (\ref{VectState2})-(\ref{VectState4}) possesses the following characteristics (properties): energy (mass), spin and charge (the first two layers in `clothing' of the operator algebra). On this level of description a \textit{state} acquires a primary meaning (in spirit of Birkhoff-von Neumann-Mackey intepretation), and \textit{observable} characteristics (energy, spin, charge, $\ldots$) are properties of the state. Subsequent `clothing' of the operator algebra leads to introduction of new properties (characteristics) of the state. For example, discrete symmetries (space inversion $P$, time reversal $T$, charge conjugation $C$ and their combinations) are appeared as automorphisms of spinor structure associated with the each state of the form (\ref{VectState2})-(\ref{VectState4}) \cite{Var01,Var05c,Var05,Var11} (see also recent paper on spinors transformations \cite{Bud16}). Further, a fractal structure of matter spectrum (third layer of `clothing') is defined by the Cartan-Bott periodicity of spinor structure \cite{Var12}. However, detailed consideration of these characteristics of the states be beyond the scope of the present paper.

\subsection{Physical Hilbert space}
A set of pure states $\omega_\Phi(H)$, defined according to GNS construction by the equality (\ref{VectState2}), at the execution of condition $\omega_\Phi(H)\geq 0$ forms a \textit{physical Hilbert space}
\[
\bsH_{\rm phys}=\bsH^S\otimes\bsH^Q\otimes\bsH_\infty.
\]
It is easy to verify that axioms of addition, multiplication and scalar (inner) product are fulfilled for the vectors $\omega_\Phi(H)\rightarrow\left|\Psi\right\rangle\in\bsH_{\rm phys}$. We assume that a so-defined Hilbert space is \textit{nonseparable}, that is, in general case the axiom of separability is not executed in $\bsH_{\rm phys}$.

The space $\bsH_{\rm phys}$ is a second member of the pair $(\sH_\infty,\bsH_{\rm phys})$, which defines two-level structure of the Hilbert space of elementary particle. Therefore, $\bsH_{\rm phys}$ describes spin and charge degrees of freedom of the particle. In accordance with the charge degrees of freedom we separate three \textit{basic subspaces} in $\bsH_{\rm phys}$.\\
1) \textit{Subspace of charged states} $\bsH^\pm_{\rm phys}=\bsH^S\otimes\bsH^\pm\otimes\bsH_\infty$.\\
2) \textit{Subspace of neutral states} $\bsH^0_{\rm phys}=\bsH^S\otimes\bsH^0\otimes\bsH_\infty$.\\
3) \textit{Subspace of truly neutral states} $\bsH^{\overline{0}}_{\rm phys}=\bsH^S\otimes\bsH^{\overline{0}}\otimes\bsH_\infty$.\\
Basis vectors $\left|\Psi\right\rangle\in\bsH^\pm_{\rm phys}$ are formed by the states $\omega^c_\Phi(H)$ (see (\ref{VectState3})). Correspondingly, $\left|\Psi\right\rangle\in\bsH^0_{\rm phys}$ and $\left|\Psi\right\rangle\in\bsH^{\overline{0}}_{\rm phys}$ are formed by the states $\omega^{r}_\rho(H)$ and $\omega^{r_0}_\rho(H)$.

Following to Birkhoff-von Neumann-Mackey interpretation \cite{BJ36,Mac57}, we assume that on the level of physical Hilbert space  $\bsH_{\rm phys}$ the original (primary) notion is a \textit{state} (\textbf{\textit{ray}}). Let $\left|\Psi\right\rangle$ be a state vector in the space $\bsH_{\rm phys}$, then $\boldsymbol{\Psi}=e^{i\alpha}\left|\Psi\right\rangle$, where $\alpha$ runs all the real numbers and $\sqrt{\left\langle\Psi\right.\!\left|\Psi\right\rangle}=1$, is called a \emph{unit ray}. Therefore, the unit ray $\boldsymbol{\Psi}$ is a totality of basis state vectors $\{\lambda\left|\Psi\right\rangle\}$, $\lambda=e^{i\alpha}$, $\left|\Psi\right\rangle\in\bsH_{\rm phys}$. As is known, the magnitudes, related with observable effects, are absolute values of a semibilinear form $|\left\langle\Psi_1\right.\!\left|\Psi_2\right\rangle|^2$ (these values do not depend on the parameters $\lambda$ characterizing the ray). Thus, a \textit{ray space} is a quotient space $\hat{H}=\bsH_{\rm phys}/S^1$, that is, a projective space of one-dimensional subspaces from $\bsH_{\rm phys}$.
All the states of physical (quantum) system (in our case, elementary particle) are described by the unit rays. We assume that basic correspondence between physical states and elements (rays) of the space $\bsH_{\rm phys}$ includes a \textit{superposition principle} of quantum theory, that is, there exists a collection of basis states such that arbitrary states can be constructed from them via the linear superpositions. Hence it follows a definition of elementary particle. \textit{An elementary particle (\textbf{single quantum microsystem}) is a superposition of state vectors in physical Hilbert space} $\bsH_{\rm phys}$.

\subsubsection{Group action in $\bsH_{\rm phys}$}
We assume that one and the same quantum system can be described by the two different ways in one and the same subspace $\bsH^\pm_{\rm phys}$ ($\bsH^0_{\rm phys}$ or $\bsH^{\overline{0}}_{\rm phys}$) of the space $\bsH_{\rm phys}$ one time by the rays $\boldsymbol{\Psi}_1$, $\boldsymbol{\Psi}_2$, $\ldots$ and other time by the rays $\boldsymbol{\Psi}^\prime_1$, $\boldsymbol{\Psi}^\prime_2$, $\ldots$. One can say that we have here a symmetry of the quantum system when one and the same physical state is described with the help of $\boldsymbol{\Psi}_1$ in the first case and with the help of $\boldsymbol{\Psi}^\prime_1$ in the second case such that probabilities of transitions are the same. Therefore, we have a mapping $\hat{T}$ between the rays $\boldsymbol{\Psi}_1$ and $\boldsymbol{\Psi}^\prime_1$. Since only the absolute values are invariant, then the transformation $\hat{T}$ in the Hilbert space $\bsH_{\rm phys}$ should be unitary or antiunitary. Both these possibilities are realized in the case of subspace $\bsH^\pm_{\rm phys}$, state vectors of which are endowed with the complex spinor structure, because the complex field has two (and only two) automorphisms preserving absolute values: an identical automorphism and complex conjugation. Also both these possibilities are realized in the subspace $\bsH^0_{\rm phys}$, since in this case state vectors are endowed with the real spinor structure with the quaternionic division ring. In the case of subspace $\bsH^{\overline{0}}_{\rm phys}$ we have only unitary transformations $\hat{T}$, because the real spinor structure with the real division ring admits only one identical automorphism.

Let $\left|\Psi_1\right\rangle$, $\left|\Psi_2\right\rangle$, $\ldots$ be the unit vectors chosen from the first totality of rays $\boldsymbol{\Psi}_1$, $\boldsymbol{\Psi}_2$, $\ldots$ and let $\left|\Psi^\prime_1\right\rangle$, $\left|\Psi^\prime_2\right\rangle$, $\ldots$ be the unit vectors chosen from the second totality $\boldsymbol{\Psi}^\prime_1$, $\boldsymbol{\Psi}^\prime_2$, $\ldots$ such that a correspondence $\left|\Psi_1\right\rangle\leftrightarrow\left|\Psi^\prime_1\right\rangle$, $\left|\Psi_2\right\rangle\leftrightarrow\left|\Psi^\prime_2\right\rangle$, $\ldots$ is unitary or antiunitary. The first collection corresponds to the states $\{\omega\}$, and the second collection corresponds to transformed states $\{g\omega\}$. We choose the vectors $\left|\Psi_1\right\rangle\in\boldsymbol{\Psi}_1$, $\left|\Psi_2\right\rangle\in\boldsymbol{\Psi}_2$, $\ldots$ and $\left|\Psi^\prime_1\right\rangle\in\boldsymbol{\Psi}^\prime_1$, $\left|\Psi^\prime_2\right\rangle\in\boldsymbol{\Psi}^\prime_2$, $\ldots$ such that
\begin{equation}\label{Action}
\left|\Psi^\prime_1\right\rangle=T_g\left|\Psi_1\right\rangle,\quad
\left|\Psi^\prime_2\right\rangle=T_g\left|\Psi_2\right\rangle,\quad\ldots
\end{equation}
It means that if $\left|\Psi_1\right\rangle$ is the vector associated with the ray $\boldsymbol{\Psi}_1$, then $T_g\left|\Psi_1\right\rangle$ is the vector associated with the ray $\boldsymbol{\Psi}^\prime_1$. If there exist two operators $T_g$ and $T_{g^\prime}$ with the property (\ref{Action}), then they can be distinguished by only a constant factor. Therefore,
\begin{equation}\label{Product}
T_{gg^\prime}=\phi(g,g^\prime)T_gT_{g^\prime},
\end{equation}
where $\phi(g,g^\prime)$ is a phase factor. Representations of the type (\ref{Product}) are called \emph{ray (projective) representations}. It means also that we have here a correspondence between physical states and rays of the Hilbert space $\bsH_{\rm phys}$. Hence it follows that the ray representation $T$ of a \emph{topological group} $G$ is a continuous homomorphism $T:\;G\rightarrow L(\hat{H})$, where $L(\hat{H})$ is a set of linear operators in the projective space $\hat{H}$ endowed with a factor-topology according to the mapping $\hat{H}\rightarrow\bsH_{\rm phys}$, that is, $\left|\Psi\right\rangle\rightarrow\boldsymbol{\Psi}$. However, when $\phi(g,g^\prime)\neq 1$ we cannot to apply the mathematical theory of usual group representations. With the aim to avoid this obstacle we construct a more large group $\bcE$ in such manner that usual representations of $\bcE$ give all nonequivalent ray representations (\ref{Product}) of the group $G$. This problem can be solved by the \emph{lifting} of projective representations of $G$ to usual representations of the group $\bcE$. Let $\bcK$ be an Abelian group generated by the multiplication of nonequivalent phases $\phi(g,g^\prime)$ satisfying the condition
\[
\phi(g,g^\prime)\phi(gg^\prime,g^{\prime\prime})=\phi(g^\prime,g^{\prime\prime})\phi(g,g^\prime g^{\prime\prime}).
\]
Let us consider the pairs $(\phi,x)$, where $\phi\in\bcK$, $x\in G$, in particular, $\bcK=\{(\phi,e)\}$, $G=\{(e,x)\}$. The pairs $(\phi,x)$ form a group with multiplication law of simidirect product type:  $(\phi_1,x_1)(\phi_2,x_2)=(\phi_1\phi(x_1,x_2)\phi_2,x_1x_2)$. The group $\bcE=\{(\phi,x)\}$ is called a \emph{central extension} of the group $G$ via the group $\bcK$. Vector representations of the group $\bcE$ contain all the ray representations of the group $G$. Hence it follows that a symmetry group $G$ of physical system induces a unitary or antiunitary representation $T$ of invertible mappings of the space $\bsH^S\otimes\bsH^Q\otimes\bsH_\infty$ into itself, which is a representation of the central extension $\bcE$ of $G$.

On the level of physical Hilbert space $\bsH_{\rm phys}$ a symmetry group $G$ is understood as one from the sequence of unitary unimodular groups: $\SU_T(2)$ (isospin group), $\SU(3)$, $\ldots$, $\SU(N)$, $\ldots$ (groups of so-called `internal' symmetries). According to Heisenberg, the groups $\SU(N)$ define \textit{dynamical (\textbf{secondary}) symmetries}. Thus, in conformity with the two-level structure of the Hilbert space of elementary particle (single quantum microsystem), defined by the pair $(\sH_\infty,\bsH_{\rm phys})$, all the set of symmetry groups $G$ is divided on the two classes: 1) \textit{groups of fundamental (primary) symmetries} $G_f$ which form state vectors of quantum microsystem; 2) \textit{groups of dynamical (secondary) symmetries} $G_d$ which describe approximate symmetries (transitions) between state vectors of quantum system\footnote{According to Wigner \cite{Wig39}, a quantum system, described by an irreducible unitary \textit{representation} of the Poincar\'{e} group $\cP$, is called an elementary particle. On the other hand, in accordance with $\SU(3)$-theory an elementary particle is described by a \emph{vector} of irreducible representation of the group $\SU(3)$. For example, in a so-called `eightfold way' \cite{GN64} the hadrons (baryons and mesons) are represented by the vectors of eight-dimensional regular representation $\Sym^0_{(1,1)}$ of the group $\SU(3)$. Thus, we have two mutually exclusive each other interpretations of elementary particle: as a \textit{representation} of the group $\cP$ and as a \textit{vector} of the representation of the group $\SU(3)$. This opposition vanishes if we adopt that all the `elementary particles' are localized states (levels) of matter spectrum. At this point, matter spectrum is realized via $\bsH_{\rm phys}$, the vectors of which are defined by cyclic representations of the operator algebra (energy operator $H$). On this level of description the group $\SU(3)$, defined in $\bsH_{\rm phys}$ via central extension, describes dynamical (approximate) symmetries between different states (more precisely, between states from different coherent subspaces in $\bsH_{\rm phys}$).}.

\subsubsection{Reduction principle}
So, dynamical symmetries $G_d$ relate different states (state vectors $\left|\Psi\right\rangle\in\bsH_{\rm phys}$) of quantum system. Symmetry $G_d$ can be represented as a \textit{quantum transition} between the states of quantum system (levels of matter spectrum). For example, if we take baryons, then the reaction $N\rightarrow Pe^-\nu$ (neutron decay) can be considered as a transition $N\rightarrow P$, the reaction $\Sigma^-\rightarrow N\pi^-$ as a transition $\Sigma\rightarrow N$ and so on (here we neglect the leptons and mesons). At this point, we assume that representing operators of the complex envelope of Lie algebra of $G_d$ realize all possible quantum transitions of the system. For example, if $G_d=\SU(3)$, then Okubo operators $\boldsymbol{\sf A}^\sigma_\tau$ transit the states (particles) of the octet into each other. It is natural to regard that operators of the group $G_d$ or its subgroup connect allied states. The chain of nested subgroups leads to a hierarchical classification of the states.

A dynamical symmetry is defined by the chain of nested Lie groups:
\[
G=G_0\supset G_1\supset G_2\supset\ldots\supset G_k.
\]
A \textit{system} with the given dynamical symmetry is defined by an irreducible representation $\fP$ of the group $G$ in the space $\bsH_{\rm phys}$. A reduction $G/G_1$ of the representation $\fP$ of the group $G$ on its subgroup $G_1$ leads to a decomposition of $\fP$ into orthogonal sum of irreducible representations $\fP^{(1)}_i$ of the subgroup $G_1$:
\[
\fP=\fP^{(1)}_1\oplus\fP^{(1)}_2\oplus\ldots\oplus\fP^{(1)}_i\oplus\ldots.
\]
In its turn, a reduction $G_1/G_2$ of the representation of the group $G_1$ on its subgroup $G_2$ leads to a decomposition of the representations $\fP^{(1)}_i$ into irreducible representations $\fP^{(2)}_{ij}$ of the group $G_2$:
\[
\fP^{(1)}_i=\fP^{(2)}_{i1}\oplus\fP^{(2)}_{i2}\oplus\ldots\oplus\fP^{(2)}_{ij}\oplus\ldots
\]
and so on\footnote{For example, one of the basic supermultiplets of $\SU(3)$-theory (baryon octet $F_{1/2}$), based on the eight-dimensional regular representation $\Sym^0_{(1,1)}$ of $\SU(3)$, admits the following $\SU(3)/\SU(2)$-reduction into isotopic multiplets of the subgroup $\SU(2)$: $\Sym^0_{(1,1)}=\Phi_3\oplus\Phi_2\oplus\overset{\ast}{\Phi}_2\oplus\Phi_0$, where $\Phi_3$ is a triplet, $\Phi_2$ and
$\overset{\ast}{\Phi}_2$ are doublets, $\Phi_0$ is a singlet. Analogously, for the hypermultiplets of $\SU(6)$-theory (baryon 56-plet and meson 35-plet) there are $\SU(6)/\SU(3)$- and $\SU(6)/\SU(4)$-reductions, where $\SU(4)$ is a Wigner subgroup \cite{RF70,Fet}.}.

\section{Coherent subspaces}
As is known \cite{WWW52}, there are unit rays which are physically unrealizable. There exist physical restrictions (\textit{superselection rules}) on the execution of superposition principle (for more details see \cite{SH701,SH702,Hor75}). In 1952, Wigner, Wightman and Wick \cite{WWW52} showed that existence of superselection rules is related with the measurability of relative phase of the superposition. It means that a pure state cannot be realized in the form of superposition of some states, for example, there is no a pure state (coherent superposition) consisting of boson $\left|\Psi_b\right\rangle$ and fermion $\left|\Psi_f\right\rangle$ states (superselection rule on spin). However, if we define the density matrix $\rho$ in $\bsH_{\rm phys}$, then a superposition $\left|\Psi_b\right\rangle+\left|\Psi_f\right\rangle$ defines a mixed state.
\begin{thm}
Physical Hilbert space $\bsH_{\rm phys}$ is decomposed into a direct sum of (non-null) coherent subspaces
\begin{equation}\label{Decomp1}
\bsH_{\rm phys}=\bsH^\pm_{\rm phys}\bigoplus\bsH^0_{\rm phys}\bigoplus\bsH^{\overline{0}}_{\rm phys},
\end{equation}
where
\begin{equation}\label{Decomp2}
\bsH^Q_{\rm phys}=\bigoplus^{|l-\dot{l}|}_{s=-|l-\dot{l}|}\bsH^{2|s|+1}\otimes\bsH^Q\otimes\bsH_\infty,\quad
Q=\{\pm,0,\overline{0}\}.
\end{equation}
At this point, superposition principle takes place in the restricted form, that is, within the limits of coherent subspaces. A non-null linear combination of vectors of pure states is a vector of pure state at the condition that all original vectors lie in one and the same coherent subspace. A superposition of vectors of pure states from different coherent subspaces defines a mixed state.
\end{thm}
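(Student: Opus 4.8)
The plan is to build the decomposition in two stages — first splitting $\bsH_{\rm phys}$ by \emph{charge}, which yields (\ref{Decomp1}), and then refining each charge sector by the \emph{spin content} of the interlocking representations $\boldsymbol{\tau}_{k/2,r/2}$, which yields (\ref{Decomp2}) — and then to read off the superposition statement from the resulting block structure together with the Wigner--Wightman--Wick criterion \cite{WWW52}, according to which a superselection rule is exactly the non-measurability of a relative phase, equivalently the vanishing of all matrix elements of admissible observables between the two families of states. Throughout I would use the GNS / Roberts--Roepstorff \cite{RR69} correspondence of the previous section between pure states and unit vectors of $\sH_\infty$, and the realization $\pi\equiv\boldsymbol{\tau}_{k/2,r/2}$ of the operator algebra on the eigenvector subspaces $\sH_E\simeq\Sym_{(k,r)}$.

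\emph{Charge decomposition.} First I would recall that every basic cyclic vector carries a spinor structure of precisely one of three types, determined by the division ring $\K$ of the underlying Clifford algebra: the complex algebra $\C_{2(k+r)}$ with $\K\simeq\C$; a real algebra $\cl_{p,q}$ with quaternionic ring $\K\simeq\BH$ or $\BH\oplus\BH$ (types $p-q\equiv 4,5,6\pmod 8$); or a real algebra $\cl_{p,q}$ with real ring $\K\simeq\R$ or $\R\oplus\R$ (types $p-q\equiv 0,1,2\pmod 8$). On these three cases the pseudoautomorphism $\cA\rightarrow\overline{\cA}$ — the algebraic form of charge conjugation $C$ — acts, respectively, as a genuine sign change of the charge, as particle--antiparticle interchange $C^\prime$, and as the identity. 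Hence the states (\ref{VectState3}) are the charged states, the states (\ref{VectState4}) with quaternionic ring the neutral states, the states (\ref{VectState4}) with real ring the truly neutral states, and every vector state (\ref{VectState2}) is of exactly one of these kinds. Since charge $Q\in\{\pm,0,\overline{0}\}$ is a superselected quantum number, the relative phase between vectors of distinct charge type is not measurable, so the subspaces $\bsH^\pm_{\rm phys}=\bsH^S\otimes\bsH^\pm\otimes\bsH_\infty$, $\bsH^0_{\rm phys}=\bsH^S\otimes\bsH^0\otimes\bsH_\infty$ and $\bsH^{\overline{0}}_{\rm phys}=\bsH^S\otimes\bsH^{\overline{0}}\otimes\bsH_\infty$ spanned by them are mutually orthogonal and exhaust $\bsH_{\rm phys}$; this is (\ref{Decomp1}).

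\emph{Spin decomposition.} Next, fixing a charge sector $\bsH^Q_{\rm phys}$, I would restrict the $\SL(2,\C)$-action on $\Sym_{(k,r)}$ to the diagonally embedded rotation subgroup $\SU(2)$ — the spin subgroup — and apply the Clebsch--Gordan series to the product of the $l=k/2$ and $\dot{l}=r/2$ factors: $\Sym_{(k,r)}$ splits into irreducible spin-$s$ components of dimension $2|s|+1$. Collecting these components over the cyclic vectors of the sector and labelling a component by its spin value $|s|$ gives the summands $\bsH^{2|s|+1}\otimes\bsH^Q\otimes\bsH_\infty$, i.e.\ (\ref{Decomp2}). The univalence (Bose--Fermi) superselection rule forbids coherent superpositions of integer- and half-integer-spin vectors; in the present framework each spin value in addition defines its own sector, since on the level of $\bsH_{\rm phys}$ the $2\pi$-rotation appears as a central element of the extension $\bcE$ of $G$ governing the ray action (\ref{Product}), and this element separates the components. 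Thus the spaces in (\ref{Decomp1})--(\ref{Decomp2}) are the maximal coherent subspaces.

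\emph{Superposition principle and the main obstacle.} Inside one coherent subspace $\bsH^{2|s|+1}\otimes\bsH^Q\otimes\bsH_\infty$ there is no superselected observable discriminating relative phases, so a normalized non-null linear combination of unit vectors of pure states is again a unit ray and, by the GNS correspondence, defines a pure state — the superposition principle holds unrestrictedly there. If instead $\left|\Psi_1\right\rangle,\ldots,\left|\Psi_n\right\rangle$ lie in distinct coherent subspaces, then every admissible observable $A$ is block-diagonal with respect to (\ref{Decomp1})--(\ref{Decomp2}), hence $\langle\Psi_i\mid A\Psi_j\rangle=0$ for $i\neq j$, and the expectation values of $A$ in the vector $\sum_i c_i\left|\Psi_i\right\rangle$ coincide with its expectation values in the statistical mixture $\sum_i|c_i|^2\left|\Psi_i\right\rangle\!\left\langle\Psi_i\right|$; so that superposition represents a mixed state, not a pure one. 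The step I expect to be the main obstacle is the rigorous justification that the charge sectors and the spin sectors are \emph{genuinely} superselected — that no admissible observable links them: for charge this rests on identifying the pseudoautomorphism $\cA\rightarrow\overline{\cA}$ with physical charge conjugation and invoking the charge superselection rule \cite{HK64}, and for spin on the univalence rule \cite{WWW52} together with the structure of the central extension $\bcE$. Everything else — orthogonality of the summands and their exhausting $\bsH_{\rm phys}$ — is then a routine verification.
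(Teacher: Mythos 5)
Your proposal is sound at the paper's own level of rigor, but it follows a genuinely different route. The paper's proof is abstract: it starts from the partition of $PS(\fA)$ into sectors (equivalence classes of the relation $\omega_1\sim\omega_2$ defined through the Roberts--Roepstorff transition probability and maximal indecomposable sets, following Haag--Kastler), transfers this relation through the GNS correspondence $\omega_\Phi(H)\leftrightarrow\left|\Psi\right\rangle$ to a total set $X\subset\bsH_{\rm phys}$, partitions $X$ into mutually orthogonal families $X_\nu$, and takes closed linear envelopes to get $\bsH_{\rm phys}=\bigoplus_\nu\bsH^\nu_{\rm phys}$; the restricted superposition principle is then read off from the bijection between pure states and unit rays in $\cup_\nu\bsH^\nu_{\rm phys}$, and only afterwards are the three charge subspaces and the spin refinement identified with these abstract coherent subspaces (the paper simply declares that identification obvious). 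You instead build the decomposition bottom-up: charge sectors from the division-ring classification of the spinor structure and the action of the pseudoautomorphism $\cA\rightarrow\overline{\cA}$, spin sectors from reduction to the rotation subgroup, superselection from the charge and univalence rules \cite{HK64,WWW52}, and the mixed-state claim from block-diagonality of admissible observables. What the paper's route buys is that orthogonality of sectors is built into the definition, so it never has to justify the physical superselection rules you yourself flag as the main obstacle; what your route buys is an explicit identification of the sectors and an actual argument (absent from the paper) for why a cross-sector superposition is operationally a mixture.

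One concrete discrepancy you should repair: your spin refinement via the Clebsch--Gordan series would decompose $\Sym_{(k,r)}$ under $\SU(2)$ into spins $|l-\dot{l}|\le s\le l+\dot{l}$, whereas the index in (\ref{Decomp2}) runs over $-|l-\dot{l}|\le s\le |l-\dot{l}|$: in the paper's conventions $s$ labels the spin lines $s=l-\dot{l}$ of the interlocking scheme of Fig.~1 (cf.\ the spin-$1/2$ line in the neutrino example), not the $\SU(2)$ content of a single representation $\boldsymbol{\tau}_{k/2,r/2}$. As stated, your spin step therefore proves a decomposition with a different index set than the one in the theorem; to match (\ref{Decomp2}) you should group the basic cyclic vectors (\ref{VectState3})--(\ref{VectState4}) by the value of $l-\dot{l}$ rather than decompose each $\Sym_{(k,r)}$ internally. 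To be fair, the paper's own proof does not derive the index range either; it only says the refinement is taken with respect to the forming component $\bsH^S$.
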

\begin{proof}
An original point of the proof is a correspondence $\omega_\Phi(H)\leftrightarrow\left|\Psi\right\rangle$ between states of operator algebra and basis vectors of the space $\bsH_{\rm phys}$. As it has been shown in the section 2.1, the set of all pure states $PS(\fA)$ of the operator algebra $\fA$ is divided in pairs on disjoint and mutually orthogonal sectors in virtue of the equivalence relation $\omega_1\sim\omega_2$. Sectors coincide with equivalence classes in $PS(\fA)$ (in essence, sector is an algebraic counterpart of the coherent subspace). Further, we assume that a some set of vectors in $\bsH_{\rm phys}$, containing pure states of the algebra $\fA$, form a total set in $\bsH_{\rm phys}$, that is, such a set $X$, closing of linear envelope of $X$ is dense everywhere in $\bsH_{\rm phys}$. Then $X$ cannot be represented as a union of the two (or more) nonempty mutually orthogonal subsets. We adopt that vectors $\left|\Psi_1\right\rangle$, $\left|\Psi_2\right\rangle\in X$ are related by a correspondence $\left|\Psi_1\right\rangle\sim\left|\Psi_2\right\rangle$, if $\left|\Psi_1\right\rangle$ and $\left|\Psi_2\right\rangle$ belong to a linear envelope from $X$. It easy to see that the correspondence $\left|\Psi_1\right\rangle\sim\left|\Psi_2\right\rangle$ is induced by the equivalence relation $\omega_1\sim\omega_2$ from $PS(\fA)$. Therefore, $\left|\Psi_1\right\rangle\sim\left|\Psi_2\right\rangle$ is an equivalence relation and equivalence classes in $X$ form a partition of $X$ onto mutually orthogonal systems $X_\nu$, where $\{\nu\}=N$ is a some index collection. Taking as $\bsH^\nu_{\rm phys}$ a closed linear envelope of the set $X_\nu$, we come to a sought decomposition of $\bsH_{\rm phys}$ into a direct sum of mutually orthogonal subspaces $\bsH^\nu_{\rm phys}$:
\[
\bsH_{\rm phys}=\bigoplus_{\nu\in N}\bsH^\nu_{\rm phys}.
\]
Thus, there is a one-to-one correspondence between pure states and unit rays in $\cup_\nu\bsH^\nu_{\rm phys}$. Hence it follows a restricted form of the superselection principle (namely, within the limits of subspaces $\bsH^\nu_{\rm phys}$). It is obvious that three base subspaces $\bsH^\pm_{\rm phys}$, $\bsH^0_{\rm phys}$ and $\bsH^{\overline{0}}_{\rm phys}$ are coherent subspaces (with respect to charge) of the original physical space $\bsH_{\rm phys}$. Hence it follows the formula (\ref{Decomp1}). A subsequent decomposition of $\bsH_{\rm phys}$ onto coherent subspaces is realized with respect to spin, that is, with respect to a forming component $\bsH^S$ in $\bsH^S\otimes\bsH^Q\otimes\bsH_\infty$ (the formula (\ref{Decomp2})).
\end{proof}
\textbf{Example}. According to modern data, neutrino is a superposition of three mass neutrino states: electron, muon and $\tau$-lepton neutrinos. All the three states lie in coherent subspases $\bsH^2\otimes\bsH^0\otimes\bsH_\infty$, which belong to spin-1/2 line of the base subspace $\bsH^0_{\rm phys}$ (subspace of neutral states). Nevertheless, all three neutrino states of matter spectrum, belonging to one and the same coherent subspace, are differed from each other by the energy value. The difference in energy is characterized by the different arrangement of corresponded representations on the spin-1/2 line.
\subsection{Gauge symmetries}
Let $G=U(1)^n\equiv U(1)\times\ldots U(1)$ be a compact $n$-parameter Abelian group (\textit{gauge group}) defined in $\bsH_{\rm phys}$ via the central extension. An arbitrary element of this group is represented by a collection of $n$ phase factors:
\[
g(s_1,\ldots,s_n)\equiv(e^{i\alpha_1},\ldots,e^{i\alpha_n}),\quad 0\leq\alpha_j<2\pi.
\]
Let us define in the space $\bsH_{\rm phys}$ an exact unitary representation $\fU$ of the group $G$. Gauge transformations in $\bsH_{\rm phys}$ have the form
\[
\fU(g)=s^{Q_1}_1\ldots s^{Q_n}_n\equiv e^{i(\alpha_1Q_1+\ldots+\alpha_nQ_n)}.
\]
Generators $Q_1$, $\ldots$, $Q_n$ of the gauge transformations are mutually commuting self-conjugated operators with integer spectrum. $Q_j$ ($j=1,\ldots,n$) are called \textit{charges}, which correspond to a given gauge group. Then $\bsH_{\rm phys}$ is decomposed into a direct sum
\begin{equation}\label{SRules}
\bsH_{\rm phys}=\bigoplus_{q_1,\ldots,q_n\in\dZ}\bsH_{\rm phys}(q_1,\ldots,q_n)
\end{equation}
of corresponded spectral subspaces consisting of the all vectors $\left|\Psi\right\rangle$ such that $(Q_j-q_j)\left|\Psi\right\rangle=0$. At this point, an arbitrary non-null vector $\left|\Psi\right\rangle\in\bsH_{\rm phys}$ defines a pure state of the algebra $\pi(H)$ exactly when $\left|\Psi\right\rangle$ is an eigenvector for the all charges. Thus, we have \textit{standard (discrete) superselection rules} in $\bsH_{\rm phys}$, and (\ref{SRules}) is a decomposition of $\bsH_{\rm phys}$ into a direct sum of coherent subspaces $\bsH_{\rm phys}(q_1,\ldots,q_n)$. According to modern situation in particle physics, superselection rules can be described completely by electric $Q$ ($=Q_1$), baryon $B$ ($=Q_2$) and lepton $L$ ($=Q_3$) charges, such that a decomposition onto coherent subspaces has the form
\[
\bsH_{\rm phys}=\bigoplus_{q,b,\ell\in\dZ}\bsH_{\rm phys}(q,b,\ell).
\]
The decomposition of $\bsH_{\rm phys}$ onto coherent subspaces with respect to charge and spin is given by the formula (\ref{Decomp1}). With the aim to describe all spectrum of observed states (levels of matter spectrum) we introduce 2-parameter gauge group $G=U(1)^2\equiv U(1)\times U(1)$ with respect to baryon $B$ and lepton $L$ charges. Then the decomposition of $\bsH_{\rm phys}$ onto coherent subspaces takes the following form:
\begin{equation}\label{Decomp3}
\bsH_{\rm phys}=\bigoplus_{b,\ell\in\dZ}\left[\bsH^\pm_{\rm phys}(b,\ell)\bigoplus\bsH^0_{\rm phys}(b,\ell)\bigoplus
\bsH^{\overline{0}}_{\rm phys}(b,\ell)\right],
\end{equation}
where
\[
\bsH^Q_{\rm phys}(b,\ell)=\bigoplus^{|l-\dot{l}|}_{s=-|l-\dot{l}|}\bsH^{2|s|+1}\otimes\bsH^Q(b,\ell)\otimes\bsH_\infty,\quad
Q=\{\pm,0,\overline{0}\}.
\]
The decomposition (\ref{Decomp3}) allows one to embrace practically the all observed spectrum of states (see Particle Data Group). First of all, matter spectrum is divided onto three sectors: lepton, meson and baryon sectors. Lepton sector includes into itself charged leptons: electron $e^-$, muon $\mu^-$, $\tau^-$-lepton (and their antiparticles). All charged leptons belong to coherent subspaces of the form $\bsH^\pm_{\rm phys}(0,\ell)$. Neutral leptons (three kinds of neutrino) belong to coherent subspace $\bsH^0_{\rm phys}(0,\ell)$. Lepton sector includes also one truly neutral state: photon $\gamma$ (subspace $\bsH^{\overline{0}}_{\rm phys}(0,\ell)$). In contrast to lepton sector, meson and baryon sectors (hadron sector in the aggregate) include a wide variety of states (particles). Meson sector is divided (with respect to charge) onto three sets of coherent subspaces. At first, charged mesons ($\pi^\pm$ (pions), $K^\pm$ (kaons), $\rho^\pm$, $\ldots$) belong to coherent subspaces of the form $\bsH^\pm_{\rm phys}(0,0)$ with integer spin (all mesons have integer spin). Further, neutral mesons ($K^0$, $\overset{\ast}{K}{}^0$, $\ldots$) belong to subspaces $\bsH^0_{\rm phys}(0,0)$ of integer spin. In turn, truly neutral mesons ($\pi^0$, $\eta$, $\varphi$, $\rho^0$, $\ldots$) are the states belonging to coherent subspace $\bsH^{\overline{0}}_{\rm phys}(0,0)$. The baryon sector is divided with respect to a charge on the two sets of coherent subspaces: charged baryons ($p$ (proton), $\Sigma^\pm$, $\Xi^\pm$, $\ldots$) form subspaces $\bsH^\pm_{\rm phys}(b,0)$ with half-integer spin (all baryons have half-integer spin); neutral baryons ($n$ (neutron), $\Sigma^0$, $\Xi^0$, $\ldots$) are the states from coherent subspaces $\bsH^0_{\rm phys}(b,0)$ of half-integer spin. Truly neutral baryons are not discovered until now.

In conclusion it should be noted that an addition of gauge symmetries leads to a triple symmetry ($G_f$, $G_d$, $G_g$) division of matter spectrum.  Namely, fundamental symmetries $G_f$ participate in formation of pure states (rays) of quantum system and coherent subspaces in $\bsH_{\rm phys}$, dynamical symmetries $G_d$ describe transitions between states from different coherent subspaces, and gauge symmetries $G_g$ relate pure states within coherent subspaces.


\begin{thebibliography}{00}
\bibitem{Ginz} Ginzburg, V.\,L.: \textit{What problems of physics and astrophysics seems now to be especially important and interesting}. Uspekhi Fiz. Nauk. {\bf 169}, 419--441 (1999) [in Russian].
%
\bibitem{RF70} Rumer, Yu.\,B., Fet, A.\,I.: Theory of Unitary Symmetry.
Nauka, Moscow (1970) [in Russian].
%
\bibitem{Guz} Guzey, V., Polyakov, M.\,V.: \textit{$\SU(3)$ systematization of baryons}. arXiv:hep-ph/0512355 (2005).
%
\bibitem{Ples} Melde, T., Plessas, W., Sengl, B.: \textit{Quark-Model Identification of Baryon Ground and Resonant States}. arXiv:0806.1454 [hep-ph] (2008).
%
\bibitem{Heisen1} Heisenberg, W.: \emph{The Nature of Elementary Particle}. Phys. Today \textbf{29}(3), 32--39 (1976).
%
\bibitem{Heisen} Heisenberg, W.: Schritte \"{u}ber Grenzen. M\"{u}nchen (1977).
%
\bibitem{Neu36} von Neumann, J.: \textit{On an algebraic generalization of the quantum mechanical formalism (Part I)}. Rec. Mat. [Mat. Sbornik] \textbf{1}(43), 415--484 (1936).
%
\bibitem{Seg47} Segal, I.: \textit{Postulates for general quantum mechanics}. Ann. Math. \textbf{48}, 930--948 (1947).
%
\bibitem{Haa59} Haag, R.: \textit{The framework of quantum field theory}. Nuovo Cimento. \textbf{14}, Suppl.1, 131--152 (1959).
%
\bibitem{Ara61} Araki, H.: \textit{Einf\"{u}hung in die axiomatishe Quantenfeldtheorie I, II}. Lectures Notes. ETH, Z\"{u}rich (1961).
%
\bibitem{HS62} Haag, R., Schroer, B.: \textit{Postulates of quantum field theory}. J. Math. Phys. \textbf{3}, 248--256 (1962).
%
\bibitem{HK64} Haag, R., Kastler, D.: \textit{An algebraic approach to field theory}. J. Math. Phys. \textbf{5}, 848--861 (1964).
%
\bibitem{Segal} Segal, I.: Mathematical Problems of Relativistic Physics. AMS, Providence (1963).
%
\bibitem{RR69} Roberts, J.\,E., Roepstorff, G.: \textit{Some Basic Concepts of Algebraic Quantum Theory}. Commun. Math. Phys. \textbf{11}, 321--338 (1969).
%
\bibitem{BLOT} Bogoljubov, N.\,N., Logunov, A.\,A., Oksak, A.\,I., Todorov, I.\,T.: General Principles of Quantum Field Theory. Nauka, Moscow (1987) [in Russian].
%
\bibitem{Se47} Segal, I.: \textit{Irreducible representations of operator algebras}. Bull. Amer. Math. Soc.  \textbf{53}, 73--88 (1947).
%
\bibitem{BHJ26} Born, M., Heisenberg, W., Jordan, P.: \emph{Zur Quantenmechanik. II}. Zs. Phys. \textbf{35}, 557--615 (1926).
%
\bibitem{Dir} Dirac, P.\,A.\,M.: {\rm The principles of quantum
mechanics}. Clarendon Press, Oxford (1958).
%
\bibitem{Var03} Varlamov, V.\,V.: \emph{General Solutions of Relativistic
Wave Equations}. Int. J. Theor. Phys. \textbf{42}, №\,3, 583--633
(2003); arXiv:math-ph/0209036 (2002).
%
\bibitem{Var07} Varlamov, V.\,V.: \emph{General Solutions of Relativistic
Wave Equations II: Arbitrary Spin Chains}. Int. J. Theor. Phys.
\textbf{46}, №\,4, 741--805 (2007); arXiv:math-ph/0503058 (2005).
%
\bibitem{Lou91} Lounesto, P.: Clifford Algebras and Spinors. Cambridge Univ. Press, Cambridge (2001).
%
\bibitem{Var15d} Varlamov, V.\,V.: \emph{Spinor Structure and Internal Symmetries}. Int. J. Theor. Phys.
\textbf{54}, №\,10, 3533--3576 (2015); arXiv:1409.1400 [math-ph] (2014).
%
\bibitem{Var16} Varlamov, V.\,V.: \textit{Spinor Structure and Matter Spectrum}. arXiv:1602.04050 [math-ph] (2016); to appear in Int. J. Theor. Phys.
%
\bibitem{Var01a} Varlamov, V.\,V.: \emph{Discrete symmetries on the spaces of quotient representations of the Lorentz group}. Mathematical structures and modeling \textbf{7}, 115--128 (2001).
%
\bibitem{Var04} Varlamov, V.\,V.: \emph{Universal Coverings of Orthogonal
Groups}. Adv. Appl. Clifford Algebras. \textbf{14}, 81--168 (2004);
arXiv:math-ph/0405040 (2004).
%
\bibitem{Var14} Varlamov, V.\,V.: \emph{$CPT$ groups of spinor fields in de Sitter and anti-de Sitter spaces}. Adv. Appl. Clifford Algebras. \textbf{25}, 487--516 (2015); arXiv: 1401.7723 [math-ph] (2014).
%
\bibitem{Var01} Varlamov, V.\,V.: {\it Discrete Symmetries and Clifford
Algebras}. Int. J. Theor. Phys. {\bf 40}, №\,4, 769--805 (2001);
arXiv:math-ph/0009026 (2000).
%
\bibitem{Var05c} Varlamov, V.\,V.: \emph{The CPT Group  in  the de Sitter
Space}. Annales de la Fondation Louis de Broglie. \textbf{29},
969--987 (2004); arXiv:math-ph/0406060 (2004).
%
\bibitem{Var05} Varlamov, V.\,V.: \emph{$CPT$ groups for spinor field in
de Sitter space}. Phys. Lett. B \textbf{631}, 187--191 (2005);
arXiv:math-ph/0508050 (2005).
%
\bibitem{Var11} Varlamov, V.\,V.: \textit{CPT Groups of Higher Spin Fields}.
Int. J. Theor. Phys. \textbf{51}, 1453--1481 (2012); arXiv: 1107.4156
[math-ph] (2011).
%
\bibitem{Bud16} Budinich, M.: \textit{On Spinors Transformations}. J. Math. Phys. \textbf{57} (2016); arXiv: 1603.02181 [math-ph] (2016).
%
\bibitem{Var12} Varlamov, V.\,V.: \textit{Cyclic structures of Cliffordian supergroups and particle
representations of $\spin_+(1,3)$}. Adv. Appl. Clifford Algebras. \textbf{24}, 849--874 (2014); arXiv: 1207.6162 [math-ph] (2012).
%
\bibitem{BJ36} Birkhoff, G., von Neumann, J.: \textit{The logic of quantum mechanics}. Ann. of Math. \textbf{37}, 823--843 (1936).
%
\bibitem{Mac57} Mackey, G.\,W.: \textit{Quantum mechanics and Hilbert space}. Amer. Math. Monthly. \textbf{64}, 45--57 (1957).
%
\bibitem{Wig39} Wigner, E.\,P.: {\it On unitary representations of the
inhomogeneous Lorentz group}. Ann. Math. {\bf 40}, 149--204 (1939).
%
\bibitem{GN64} Gell-Mann, M., Ne'eman, Y.: {\rm The Eightfold Way}.
Benjamin, New York (1964).
%
\bibitem{Fet} Fet, A.\,I.: Symmetry group of chemical elements. Novosibirsk, Nauka (2010) [in Russian].
%
\bibitem{WWW52} Wick, G.\,G., Wigner, E.\,P., Wightman, A.\,S.: \emph{Intrinsic Parity of Elementary Particles}. Phys. Rev. \textbf{88}, 101 (1952).
%
\bibitem{SH701} Sushko, V.\,N., Khoruzhy, S.\,S.: \textit{Vector states on algebras of observables and superselection rules. I Vector states and Hilbert space}. Teor. Math. Fiz. \textbf{4}, 171--195 (1970).
%
\bibitem{SH702} Sushko, V.\,N., Khoruzhy, S.\,S.: \textit{Vector states on algebras of observables and superselection rules. II Algebraical theory of superselection rules}. Teor. Math. Fiz. \textbf{4}, 341--359 (1970).
%
\bibitem{Hor75} Khoruzhy, S.\,S.: \textit{On superposition principle in algebraic quantum theory}. Teor. Math. Fiz. \textbf{23}, 147--159 (1975).
%

\end{thebibliography}
\end{document}